\newcommand\norm[1]{\left\lVert#1\right\rVert}
\newtheoremstyle{mystyle}
{}
{}
{\itshape}
{}
{\bfseries}
{.}
{ }
{}
\theoremstyle{mystyle}
\newtheorem{proposition}{Proposition}
\begin{document}
\title{Energy Efficient ADC Bit Allocation and Hybrid Combining for Millimeter Wave MIMO Systems}

\author{\IEEEauthorblockN{Aryan Kaushik$^1$, Christos Tsinos$^2$, Evangelos Vlachos$^1$, John Thompson$^1$}
\IEEEauthorblockA{$^1$Institute for Digital Communications, The University of Edinburgh, United Kingdom.\\
$^2$Interdisciplinary Centre for Security, Reliability and Trust, University of Luxembourg, Luxembourg.\\
Emails: \{a.kaushik, e.vlachos, j.s.thompson\}@ed.ac.uk, christos.tsinos@uni.lu}}

\maketitle

\begin{abstract}
Low resolution analog-to-digital converters (ADCs) can be employed to improve the energy efficiency (EE) of a wireless receiver since the power consumption of each ADC is exponentially related to its sampling resolution and the hardware complexity. In this paper, we aim to jointly optimize the sampling resolution, i.e., the number of ADC bits, and analog/digital hybrid combiner matrices which provides highly energy efficient solutions for millimeter wave multiple-input multiple-output systems. A novel decomposition of the hybrid combiner to three parts is introduced: the analog combiner matrix, the bit resolution matrix and the baseband combiner matrix. The unknown matrices are computed as the solution to a matrix factorization problem where the optimal, fully digital combiner is approximated by the product of these matrices. An efficient solution based on the alternating direction method of multipliers is proposed to solve this problem. The simulation results show that the proposed solution achieves high EE performance when compared with existing benchmark techniques that use fixed ADC resolutions.
\end{abstract}

\begin{IEEEkeywords}
energy efficient design, optimal bit resolution and hybrid combining, mmWave MIMO. 
\end{IEEEkeywords}

%
\IEEEpeerreviewmaketitle

\section{Introduction}
The analog/digital (A/D) hybrid beamforming architectures for millimeter wave (mmWave) multiple-input multiple-output (MIMO) systems reduce the hardware complexity and the power consumption through fewer radio frequency (RF) chains and support multi-stream communication with good capacity performance \cite{ayachTWC2014, aryanIET2016, hanCM2015}. Designing such systems for high energy efficiency (EE) gains would leverage their significance \cite{ranziSAC2016, tsinosSAC2017}. An alternative solution to reduce the power consumption and hardware complexity is by reducing the resolution sampling \cite{heathSSP2016}. Some approaches have been applied in hybrid mmWave MIMO systems for EE maximization and low complexity with full resolution \cite{aryanTGCN2019} and low resolution \cite{aryanICC2019}.

The existing literature mostly discusses full or high resolution analog-to-digital converters (ADCs) with a small number of RF chains or low resolution ADCs with a large number of RF chains: either way only the fixed resolution ADCs are taken into account. References \cite{ranziSAC2016, tsinosSAC2017} consider EE optimization problems for A/D hybrid transceivers but with fixed and high resolution digital-to-analog converters (DACs)/ADCs. Reference \cite{aryanICC2019} proposes a novel EE maximization transmission technique with subset selection optimization to find the best subset of the active RF chains and DAC resolution, which can be extended to low resolution ADCs at the receiver (RX). 
Reference \cite{jmoITG2016} suggests implementing fixed and low resolution ADCs with few RF chains. Reference \cite{zhangSAC2017} studies the idea of a mixed-ADC architecture where a better energy-rate trade off is achieved by using mixed resolution ADCs but still with a fixed resolution for each ADC and it does not consider A/D hybrid beamforming. A hybrid beamforming system with fixed and low resolution ADCs has been analyzed for channel estimation in \cite{aryanEUSIPCO2018}. Varying resolution ADCs can be implemented at the RX \cite{zhangTWC2016} which may provide a better solution than fixed and low resolution ADCs. Extra care is needed when deciding the range of number of ADC bits as the total ADC power consumption can be dominated by only a few high resolution ADCs. Thus, a good trade-off between power consumption and performance is to consider the range of 1-8 bits for the varying number of ADC bits.


\subsubsection*{Contributions}
This paper designs an optimal EE solution for a mmWave A/D hybrid receiver MIMO system by introducing the novel decomposition of the A/D hybrid combiner to three parts representing the analog combiner matrix, the bit resolution matrix and baseband combiner matrix. Our aim is to minimize the distance between this decomposition, which is expressed as the product of three matrices, and the fully digital combiner matrix. The joint problem is decomposed into a series of sub-problems which are solved using an alternating optimization framework, i.e., alternating direction method of multipliers (ADMM) is developed to obtain the unknown matrices. The proposed design has high flexibility, given that the analog combiner is codebook-free, thus there is no restriction on the angular vectors and different bit resolutions can be assigned to each ADC. Our proposed solution optimizes the resolution on a packet-by-packet basis for each one of the ADCs unlike existing approaches that are based on fixed resolution sampling. We also implement an exhaustive search approach \cite{ranziSAC2016} for comparison which provides the upper bound for EE maximization.

\emph{Notation:} $\textbf{A}$, $\textbf{a}$ and $\textit{a}$ denote a matrix, a vector and a scalar, respectively. The complex conjugate transpose and transpose of $\textbf{A}$ are denoted as $\textbf{A}^{H}$ and $\textbf{A}^{T}$; $|a|$ represents the determinant of $a$;
$\textbf{I}_{N}$ represents $N \times N$ identity matrix; $\textbf{X} \in \mathbb{C}^{A \times B}$ and $\textbf{X} \in \mathbb{R}^{A \times B}$ denote $A \times B$ size $\mathbf{X}$ matrix with complex and real entries, respectively; $\mathcal{C}\mathcal{N} (\textbf{a}, \textbf{A})$ denotes a complex Gaussian vector having mean $\textbf{a}$ and covariance matrix $\textbf{A}$; 
$[\mathbf{A}]_{kl}$ is the matrix entry at the $k$-th row and $l$-th column. The indicator function $\mathds{1}_{\mathcal{S}}\left\{\mathbf{A}\right\}$ of a set $\mathcal{S}$ that acts over a matrix $\mathbf{A}$  is defined as $0 \hspace{1mm} \forall \hspace{1mm} \mathbf{A} \in \mathcal{S}$ and $\infty \hspace{1mm} \forall \hspace{1mm} \mathbf{A} \notin \mathcal{S}$. 

\section{A/D Hybrid MmWave MIMO System}
\subsection{MmWave Channel Model}
MmWave channels can be modeled by a narrowband clustered channel model due to different channel settings such as number of multipaths, amplitudes, etc., with $N_{\textrm{cl}}$ clusters and $N_{\textrm{ray}}$ propagation paths in each cluster \cite{ayachTWC2014}. Considering a single user mmWave system with $N_{\textrm{T}}$ antennas at the transmitter (TX), transmitting $N_\textrm{s}$ data streams to $N_\textrm{R}$ antennas at the RX, the mmWave channel matrix can be written as follows:
\begin{equation}\label{eq:channel_model}
\mathbf{H} = \sqrt{\frac{N_\textrm{T}N_\textrm{R}}{N_\textrm{cl} N_\textrm{ray}}} \sum_{i=1}^{N_{\textrm{cl}}} \sum_{l=1}^{N_{\textrm{ray}}} \alpha_{il} \mathbf{a}_{\textrm{R}}(\phi_{il}^{r}) \mathbf{a}_{\textrm{T}}(\phi_{il}^{t})^H,
\end{equation}
where $\alpha_{il} \in \mathcal{C}\mathcal{N}(0,\sigma_{\alpha,i}^2)$ is the gain term with $\sigma_{\alpha,i}^2$ being the average power of the $i^{th}$ cluster. Furthermore, $\mathbf{a}_{\textrm{T}}(\phi_{il}^{t})$ and $\mathbf{a}_{\textrm{R}}(\phi_{il}^{r})$ represent the normalized transmit and receive array response vectors \cite{ayachTWC2014}, where $\phi_{il}^{t}$ and $\phi_{il}^{r}$ denote the azimuth angles of departure and arrival, respectively. We use uniform linear array (ULA) antennas for simplicity and model the antenna elements at the RX as ideal sectored elements \cite{singh}. However, the proposed technique is not limited to this setup and can be easily extended to the case of wideband channels and uniform planar/circular arrays. 

\subsection{A/D Hybrid MIMO System Model}
Based on the A/D hybrid beamforming scheme in the large-scale mmWave MIMO communication systems, the number of RX RF chains $L_{\textrm{R}}$ follows the limitation $N_\textrm{s} \leq L_\textrm{R} \leq N_\textrm{R}$ \cite{ayachTWC2014, aryanIET2016}. The matrices $\mathbf{W}_{\textrm{RF}} \in \mathbb{C}^{N_\textrm{R}\times L_\textrm{R}}$ and $\mathbf{W}_{\textrm{BB}} \in \mathbb{C}^{L_\textrm{R}\times N_\textrm{s}}$ denote the analog combiner and baseband (or digital) combiner matrices, respectively. The analog combiner matrix $\mathbf{W}_{\textrm{RF}}$ is based on phase shifters, i.e., the elements that have unit modulus and continuous phase. Thus, $\mathbf{W}_\textrm{RF} \in \mathcal{W}^{N_\textrm{R} \times L_\textrm{R}}$ where the set $\mathcal{W}$ represents the set of possible phase shifts in $\mathbf{W}_\textrm{RF}$ and for a variable $a$, is defined as, $\mathcal{W} = \left\{a \in \mathbb{C} \ | \ |a| = 1\right\}$. At the TX, with $L_\textrm{T}$ RF chains, the analog precoder matrix is denoted as $\mathbf{F}_\textrm{RF} \in \mathbb{C}^{N_{\textrm{T}} \times L_\textrm{T}}$ and the baseband precoder matrix is denoted as $\mathbf{F}_\textrm{BB} \in \mathbb{C}^{L_\textrm{T} \times N_\textrm{s}}$. The received signal $\textbf{y} \in \mathbb{C}^{N_\textrm{R} \times 1}$ can be expressed as:
 \begin{equation}\label{eq:receiver_system}
 \mathbf{y} = \mathbf{H}\mathbf{F}_\textrm{RF}\mathbf{F}_\textrm{BB}\mathbf{x} + \mathbf{n},
 \end{equation}
 where $\mathbf{x} \in \mathbb{C}^{N_\textrm{s} \times 1}$ is the transmit symbol vector and $\mathbf{n} \in \mathbb{C}^{N_\textrm{R} \times 1}$ is a noise vector with independent and identically distributed entries and follow the complex Gaussian distribution with zero mean and $\sigma_\textrm{n}^2$ variance, i.e., $\mathbf{n} \sim \mathcal{C}\mathcal{N}(\mathbf{0},\sigma_\textrm{n}^2 \mathbf{I}_{N_\textrm{R}})$.
 
As widely used in the existing literature, we consider the linear additive quantization noise model (AQNM) to represent the distortion of quantization \cite{orhanITA2015}.
Given that $Q(\cdot)$ denotes a uniform scalar quantizer then for the scalar complex input $x \in \mathbb{C}$ that is applied to both the real and imaginary parts, we have that,
\begin{equation}\label{eq:approx}
Q(x) \approx \delta x + \epsilon,
\end{equation}
where $\delta = \sqrt{1-\frac{\pi \sqrt{3}}{2}2^{-2 b}} \in [m,M]$ is the multiplicative distortion parameter for a bit resolution equal to $b$ \cite{mezghaniISIT2012} where $m$ and $M$ denote the minimum and maximum value of the range. Note that the introduced error in the linear approximation in \eqref{eq:approx} decreases for larger resolutions. However, our proposed solution focuses on EE maximization and this linear approximation does not impact the performance significantly as observed from the simulation results in Section IV.
The parameter $\epsilon$ is the additive quantization noise with $\epsilon \sim \mathcal{CN}(0, \sigma_{\epsilon}^2)$ , where  $\sigma_{\epsilon} = \sqrt{1-\frac{\pi \sqrt{3}}{2}2^{-2 b}} \sqrt{\frac{\pi \sqrt{3}}{2}2^{-2 b}}$. Based on AQNM, the vector containing the complex output of all the ADCs can be expressed as follows:
\begin{equation}\label{eq:quantization_mimo_model}
Q(\mathbf{W}_\textrm{RF}^H\mathbf{y} ) \approx \boldsymbol{\Delta}^H \mathbf{W}_\textrm{RF}^H \mathbf{y} + \boldsymbol{\epsilon}, 
\end{equation}
where $Q(\mathbf{W}_\textrm{RF}^H \mathbf{y}) \in \mathbb{C}^{L_{\textrm{R}} \times 1}$ and $\boldsymbol{\Delta} = \boldsymbol{\Delta}^H \in \mathbb{C}^{L_\textrm{R} \times L_\textrm{R}}$ is a diagonal matrix with values depending on the ADC resolution $b_i$ of each ADC. Specifically, each diagonal entry of $\mathbf{\Delta}$ is given by: 
\begin{equation}\label{eq:delta_distorsion_rf}
[\boldsymbol{\Delta}]_{ii} = \sqrt{1-\frac{\pi \sqrt{3}}{2}2^{-2 b_i}} \in [m,M] \hspace{1mm} \forall \hspace{1mm} i=1,\ldots, L_{\textrm{R}},
\end{equation}
where, for simplicity, we assume that the range $[m,M]$ is the same for each one of the ADCs. The second term of \eqref{eq:quantization_mimo_model} expresses the additive quantization noise for all RF chains, with $\boldsymbol{\epsilon} \in \mathcal{C}\mathcal{N}(\mathbf{0}, \mathbf{C}_{\epsilon})$ \cite{aryanICC2019} 
where $\mathbf{C}_\epsilon$ is a diagonal covariance matrix with entries as follows: 
\begin{equation}
    [\mathbf{C}_{\epsilon}]_{ii} = \left(1-\frac{\pi \sqrt{3}}{2}2^{-2 b_i}\right) \left(\frac{\pi \sqrt{3}}{2}2^{-2 b_i}\right) \forall \hspace{1mm} i=1,\ldots, L_{\textrm{R}}.
\end{equation}



\begin{figure}[t]
 	\begin{center}
 		\includegraphics[width=0.49\textwidth, trim=225 120 40 130,clip ]{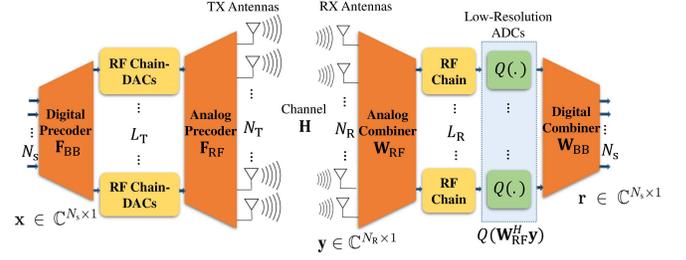}
 		\caption{A mmWave A/D hybrid MIMO system with low resolution ADCs.}
 	\end{center}
 	\vspace{-1mm}
 \end{figure}



After the effect of the quantization and application of the baseband combining matrix, the output $\mathbf{r} \in \mathbb{C}^{N_\textrm{s} \times 1}$ at the RX can be expressed as:
\begin{align}\label{eq:receiver_output}
\mathbf{r} =  
\mathbf{W}_\textrm{BB}^H \mathbf{\Delta}^H \mathbf{W}_\textrm{RF}^H \mathbf{y} + \mathbf{W}_\textrm{BB}^H \boldsymbol{\epsilon}. 
\end{align}
Based on the received signal expression in \eqref{eq:receiver_system}, we can express \eqref{eq:receiver_output} as follows:
\begin{align}\label{eq:system_model}
\mathbf{r} =  
\mathbf{W}_\textrm{BB}^H \mathbf{\Delta}^H \mathbf{W}_\textrm{RF}^H \mathbf{H}\mathbf{F}_\textrm{RF}\mathbf{F}_\textrm{BB}\mathbf{x} + \underbrace{\mathbf{W}_\textrm{BB}^H \mathbf{\Delta}^H \mathbf{W}_\textrm{RF}^H \mathbf{n} + \mathbf{W}_\textrm{BB}^H \boldsymbol{\epsilon}}_{\boldsymbol{\eta}},
\end{align}
where $\boldsymbol{\eta}$ is the combined effect of the Gaussian and the quantization noise with $\boldsymbol{\eta} \sim \mathcal{CN}(\mathbf{0}, \mathbf{R}_\eta)$. Here $\mathbf{R}_\eta \in \mathbb{C}^{L_\textrm{R}\times L_\textrm{R}}$ is the combined noise covariance matrix with, 
\begin{equation}\label{eq:R_eta}
\mathbf{R}_\eta = \sigma_\textrm{n}^2 \mathbf{W}_\textrm{BB}^H\mathbf{\Delta}^H\mathbf{W}_\textrm{RF}^H\mathbf{W}_\textrm{RF}\mathbf{\Delta}\mathbf{W}_\textrm{BB} + \mathbf{W}_\textrm{BB}^H\mathbf{C}_\epsilon\mathbf{W}_\textrm{BB}.
\end{equation}

\section{Bit Allocation and Hybrid Combiner Design}
\subsection{Problem Formulation}
Let us consider a point-to-point MIMO system with the linear quantization model. We define the EE as the ratio of the information rate and the total consumed power as,
\begin{equation}\label{eq:ee_problem}
 EE(\mathbf{W}_\textrm{RF}, \mathbf{\Delta}, \mathbf{W}_\textrm{BB}) \triangleq \frac{R(\mathbf{W}_\textrm{RF}, \mathbf{\Delta}, \mathbf{W}_\textrm{BB})}{P(\mathbf{\Delta})} \,\, \textrm{(bits/Joule)},
\end{equation}
where the information rate is defined as,
\begin{align}\label{eq:rate}
R(\mathbf{W}_\textrm{RF}, \mathbf{\Delta}, \mathbf{W}_\textrm{BB}) \triangleq \log_2 \vert \mathbf{I}_{L_\textrm{R}} + \frac{\mathbf{R}^{-1}_\eta}{N_\textrm{s}} \mathbf{W}_\textrm{BB}^H \mathbf{\Delta}^H\mathbf{W}_\textrm{RF}^H\mathbf{H} \mathbf{F} \times \nonumber \\ \mathbf{F}^H\mathbf{H}^H \mathbf{W}_\textrm{RF} \mathbf{\Delta} \mathbf{W}_\textrm{BB} \vert \hspace{1mm} \textrm{(bits/s)},
\end{align}
where the A/D hybrid precoder $\mathbf{F}=\mathbf{F}_\textrm{RF}\mathbf{F}_\textrm{BB}\in \mathbb{C}^{N_\textrm{T} \times N_\textrm{s}}$. 


Similar to the power model at the TX in \cite{aryanICC2019}, the total consumed power at the RX is expressed as:
\begin{equation}\label{eq:power_total}
P(\mathbf{\Delta}) = P_\textrm{D} + N_\textrm{R}P_\textrm{R} +  N_\textrm{R}L_\textrm{R}P_\textrm{PS} + P_\textrm{CP} \,\, \textrm{(W)},
\end{equation}
where $P_\textrm{PS}$ is the power per phase shifter, $P_\textrm{R}$ is the power per antenna, $P_\textrm{D}$ is the power associated with the total quantization operation, and following \eqref{eq:delta_distorsion_rf} and \cite{orhanITA2015}, we have 
\begin{equation}\label{eq:power_adc}
P_\textrm{D} \!=\! P_\textrm{ADC} \sum_{i=1}^{L_{\textrm{R}}} 2^{b_i} \!=\! P_\textrm{ADC} \sum_{i=1}^{L_{\textrm{R}}} \left( \frac{\pi \sqrt{3}}{2 (1 - [\boldsymbol{\Delta}]_{ii}^2)} \right)^{\frac{1}{2}} \,\, \textrm{(W)},
\end{equation}
where $P_\textrm{ADC}$ is the power consumed per bit in the ADC and $P_\textrm{CP}$ is the power required by all circuit components. 

Considering the rate and power model in \eqref{eq:rate} and \eqref{eq:power_total}, respectively, we can express the following fractional problem:
\begin{align}
(\mathcal{P}_\textrm{1}): \hspace{5pt} \max_{\mathbf{W}_\textrm{RF}, \mathbf{\Delta}, \mathbf{W}_\textrm{BB}} \,\, \frac{R(\mathbf{W}_\textrm{RF}, \mathbf{\Delta}, \mathbf{W}_\textrm{BB})}{P(\mathbf{\Delta})} \nonumber \\
\textrm{ subject to } \mathbf{W}_\textrm{RF} \in \mathcal{W}^{N_\textrm{R} \times L_\textrm{R}}, 
\mathbf{\Delta} \in \mathcal{D}^{L_\textrm{R} \times L_\textrm{R}} \nonumber,
\end{align}
where the set $\mathcal{D}$ represents the finite states of the quantizer and is defined as,
\[\mathcal{D} = \left\{\mathbf{\Delta} \in \mathbb{R}^{L_\textrm{R} \times L_\textrm{R}} \big| m \le [\mathbf{\Delta}]_{ii} \le M \hspace{1mm} \forall \hspace{1mm} i = 1,..., L_\textrm{R}\right\}.\]
The channel's singular value decomposition (SVD) is written as $\mathbf{H} = \mathbf{U}_\textrm{H} \mathbf{\Sigma}_\textrm{H} \mathbf{V}_\textrm{H}^H$, where $\mathbf{U}_\textrm{H} \in \mathbb{C}^{N_\textrm{R} \times N_\textrm{R}}$ and $\mathbf{V}_\textrm{H} \in \mathbb{C}^{N_\textrm{T} \times N_\textrm{T}}$ are unitary matrices, and $\mathbf{\Sigma}_\textrm{H} \in \mathbb{R}^{{N_\textrm{R} \times N_\textrm{T}}}$ is a rectangular matrix of singular values in decreasing order whose diagonal elements are non-negative real numbers and whose non-diagonal elements are zero. The optimal, fully digital combiner matrix $\mathbf{W}_\textrm{opt}$ consists of the $N_\textrm{s}$ columns of the left singular matrix $\mathbf{U}_\textrm{H}$. Our goal, by solving ($\mathcal{P}_\textrm{1}$), is to obtain the combiner matrices and the bit resolution matrix in an optimal manner. We introduce the novel decomposition of the A/D hybrid combiner to three parts representing the analog combiner matrix, the bit resolution matrix and digital combiner matrix, 
i.e., $\mathbf{W}_\textrm{RF}\mathbf{\Delta} \mathbf{W}_\textrm{BB}$. So the Euclidean distance $\norm{\mathbf{W}_\textrm{opt} - \mathbf{W}_\textrm{RF}\mathbf{\Delta}\mathbf{W}_\textrm{BB}}_F^2$ should be as small as possible for a maximum throughput combiner design. Note that we optimize over the bit resolution matrix with varying resolutions and the choice of combiner matrices at the RX.

\begin{proposition}
The maximization of the fractional problem $(\mathcal{P}_\textrm{1})$ is equivalent with the solution of the following problem:
\begin{align*}
\hspace{-5pt}(\mathcal{P}_\textrm{2}): \hspace{5pt} &\min_{\mathbf{W}_\textrm{RF},\mathbf{\Delta},\mathbf{W}_\textrm{BB}} \frac{1}{2}\norm{\mathbf{W}_\textrm{opt}-\mathbf{W}_\textrm{RF}\mathbf{\Delta}\mathbf{W}_\textrm{BB}}_F^2 + \gamma P(\mathbf{\Delta}),\nonumber \\
& \textrm{subject to}
\hspace{1mm} \mathbf{W}_\textrm{RF} \in \mathcal{W}^{N_\textrm{R} \times L_\textrm{R}}, \mathbf{\Delta} \in \mathcal{D}^{L_\textrm{R} \times L_\textrm{R}},
\end{align*} 
where the parameter $\gamma \in \mathbb{R}^+$ denotes the trade-off between the rate and the power consumption.
\end{proposition}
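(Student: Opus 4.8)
The plan is to reduce the fractional rate problem $(\mathcal{P}_\textrm{1})$ to the penalized matrix-factorization problem $(\mathcal{P}_\textrm{2})$ in two stages: first replacing the rate objective by a Frobenius approximation error, then linearizing the ratio via parametric fractional programming. I would first show that maximizing the rate $R$ in \eqref{eq:rate} over $(\mathbf{W}_\textrm{RF},\mathbf{\Delta},\mathbf{W}_\textrm{BB})$ is equivalent to minimizing $\tfrac{1}{2}\norm{\mathbf{W}_\textrm{opt}-\mathbf{W}_\textrm{RF}\mathbf{\Delta}\mathbf{W}_\textrm{BB}}_F^2$. The optimal fully digital combiner $\mathbf{W}_\textrm{opt}$, built from the $N_\textrm{s}$ dominant left singular vectors of $\mathbf{H}$, attains the channel's mutual information, a quantity that does not depend on the optimization variables; the rate delivered by the hybrid, quantized combiner is then a monotonically decreasing function of the Frobenius distance between the realized combiner $\mathbf{W}_\textrm{RF}\mathbf{\Delta}\mathbf{W}_\textrm{BB}$ and $\mathbf{W}_\textrm{opt}$. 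Thus, up to an additive constant and a positive scale factor, maximizing $R$ coincides with minimizing this squared distance.

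I would then remove the fractional structure using Dinkelbach's parametric transform. Since the denominator $P(\mathbf{\Delta})$ in \eqref{eq:power_total} is strictly positive on the feasible set, maximizing $R/P$ is equivalent, at the optimal efficiency level $\gamma^\star$, to solving the parametric problem $\max\{R-\gamma^\star P(\mathbf{\Delta})\}$, whose maximizer also maximizes the original ratio. Substituting the first-stage equivalence, the term $R$ is replaced by $-\tfrac{1}{2}\norm{\mathbf{W}_\textrm{opt}-\mathbf{W}_\textrm{RF}\mathbf{\Delta}\mathbf{W}_\textrm{BB}}_F^2$ (discarding the additive capacity constant and absorbing the positive scale into $\gamma$), so that maximizing $R-\gamma P$ becomes minimizing $\tfrac{1}{2}\norm{\mathbf{W}_\textrm{opt}-\mathbf{W}_\textrm{RF}\mathbf{\Delta}\mathbf{W}_\textrm{BB}}_F^2+\gamma P(\mathbf{\Delta})$. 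The constraints $\mathbf{W}_\textrm{RF}\in\mathcal{W}^{N_\textrm{R}\times L_\textrm{R}}$ and $\mathbf{\Delta}\in\mathcal{D}^{L_\textrm{R}\times L_\textrm{R}}$ transfer unchanged, recovering $(\mathcal{P}_\textrm{2})$ with $\gamma$ in its stated role as the rate--power trade-off weight.

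The main obstacle I anticipate is making the first stage rigorous. The equivalence between rate maximization and Frobenius-distance minimization holds exactly only in limiting regimes---high SNR, or whenever the three-factor product can reproduce $\mathbf{W}_\textrm{opt}$ exactly---and in general rests on the bound that a small approximation error implies a small rate loss. I would therefore justify this reduction within the same small-error, high-resolution approximation already invoked for the AQNM model in \eqref{eq:approx}, or alternatively cite the established hybrid-beamforming result that the distance-minimizing combiner is near rate-optimal, and use it to replace $R$ by the negative squared error in the parametric objective. The Dinkelbach step, by contrast, is exact and poses no difficulty beyond verifying positivity of $P(\mathbf{\Delta})$.
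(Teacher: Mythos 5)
Your proposal matches the paper's own argument: the paper's proof likewise combines the Dinkelbach parametric transformation of the fractional objective with the established hybrid-beamforming result (from the cited works of El Ayach et al. and Kaushik et al.) that rate maximization can be recast as minimizing the Frobenius distance to $\mathbf{W}_\textrm{opt}$, and your reordering of these two steps is immaterial. Your explicit caveat that the rate--distance equivalence is an approximation valid in limiting regimes is in fact more candid than the paper, which omits all details ``due to space limitations.''
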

\begin{proof} The main idea to prove the equivalence is first to apply the Dinkelbach approach to transform the fractional problem into an affine one \cite{dinkelbach}. Afterwards, based on \cite{ayachTWC2014, aryanIET2016}, the maximization of the rate $R$ can be expressed as minimization of the Euclidean distance between the computed A/D hybrid combiner and the optimal, fully digital combiner $\mathbf{W}_\textrm{opt}$. The details of this proof are omitted due to space limitations. \end{proof}

Parameter $\gamma$ also determines how close is the solution of $(\mathcal{P}_\textrm{2})$ to $(\mathcal{P}_\textrm{1})$. In this work, $\gamma$ is selected after an exhaustive search over all the possible values in the range of [0.001, 0.1] and the value which gives the best result for ($\mathcal{P}_\textrm{2}$) is selected. Problem $(\mathcal{P}_\textrm{2})$ is non-convex due to the constraints on the structure of matrix $\mathbf{W}_\textrm{RF}$. Similar non-convex problems have been recently addressed in the literature via alternating direction method of multipliers (ADMM) based solutions \cite{ADMM,TSINOSNC1,TSINOSNC2}. 

\subsection{Proposed ADMM Solution}
In the following we develop an iterative procedure for solving $(\mathcal{P}_\textrm{2})$ based on the ADMM approach \cite{ADMM}. This method, is a variant of the standard augmented Lagrangian method that uses partial updates (similar to the Gauss-Seidel method for the solution of linear equations) to solve constrained optimization problems. This method replaces a constrained minimization problem by a series of unconstrained problems and add a penalty term to the objective function. This penalty improves robustness compared to other optimization methods for constrained problems (for example, the dual ascent method) and in particular achieves convergence without the need of specific assumptions for the objective function, i.e., strict convexity and finiteness. 
The interested reader may refer to \cite{ADMM} for further information. 

We first transform $(\mathcal{P}_\textrm{2})$ into a form that can be addressed via ADMM. By using the auxiliary variable $\mathbf{Z}$, $(\mathcal{P}_\textrm{2})$ can be written in the following form:
\begin{align*}
\hspace{-2pt}(\mathcal{P}_\textrm{3}): \hspace{2pt}  &\min_{\substack{\mathbf{Z},\mathbf{W}_\textrm{RF}, \mathbf{\Delta},\mathbf{W}_\textrm{BB}}} \frac{1}{2}\Vert\mathbf{W}_\textrm{opt}-\mathbf{Z}\Vert_F^2 + \mathds{1}_{\mathcal{W}^{N_\textrm{R} \times L_\textrm{R}}}\{\mathbf{W}_\textrm{RF}\} \nonumber \\ 
& \hspace{19mm} + \mathds{1}_{\mathcal{D}^{L_\textrm{R} \times L_\textrm{R}}}\{\mathbf{\Delta}\} + \gamma P(\mathbf{\Delta}), \\
& \hspace{20mm} \textrm{subject to} \hspace{1mm} \mathbf{Z} = \mathbf{W}_\textrm{RF}\mathbf{\Delta}\mathbf{W}_\textrm{BB}.
\end{align*} 
%
%
%
Problem $(\mathcal{P}_\textrm{3})$ formulates the A/D hybrid combiner matrix design as a matrix factorization problem. That is, the overall combiner $\mathbf{Z}$ is sought so that it minimizes the Euclidean distance to the optimal, fully digital combiner $\mathbf{W}_\textrm{opt}$ while supporting decomposition into three factors: the analog combiner matrix $\mathbf{W}_\textrm{RF}$, the matrix $\boldsymbol{\Delta}$ which is related to the resolution of each ADC and the digital combiner matrix $\mathbf{W}_\textrm{BB}$.

The augmented Lagrangian function of $(\mathcal{P}_\textrm{3})$ is given by, 
\begin{align}
\mathcal{L}(\mathbf{Z},\!\mathbf{W}_\textrm{RF},\!\mathbf{\Delta},\!\mathbf{W}_\textrm{BB},\!\mathbf{\Lambda}) \! = \! \frac{1}{2}\lVert\mathbf{W}_\textrm{opt}\!-\!\mathbf{Z}\rVert^2_F  \!+\! \mathds{1}_{\mathcal{W}^{N_\textrm{R} \times L_\textrm{R}}}\{\mathbf{W}_\textrm{RF}\} \nonumber\\ 
\!+\mathds{1}_{\mathcal{D}^{L_\textrm{R} \times L_\textrm{R}}}\{{\mathbf{\Delta}}\}  \!+\! \frac{\alpha}{2}\lVert\mathbf{Z}\!+\!\mathbf{\Lambda}/\alpha\! - \!\mathbf{W}_\textrm{RF}\mathbf{\Delta}\mathbf{W}_\textrm{BB}\rVert_F^2 \!+\! \gamma P(\mathbf{\Delta}),
\label{EQ:AUGLAN} 
\end{align}
where $\alpha$ is a scalar penalty parameter and $\mathbf{\Lambda} \in \mathbb{C}^{N_\textrm{R}\times L_\textrm{R}}$ is the Lagrange Multiplier matrix. 
%
According to ADMM \cite{ADMM}, the solution to  $(\mathcal{P}_\textrm{3})$ is derived by the following iterative steps:
\allowdisplaybreaks
\begin{align}
&\hspace{0pt}(\mathcal{P}_\textrm{3A}): \hspace{0pt} \mathbf{Z}_{(n)} = \arg \min_{\mathbf{Z}}  \frac{1}{2}\lVert(1+\alpha)\mathbf{Z}-\mathbf{W}_\textrm{opt}+\mathbf{\Lambda}_{(n-1)} \nonumber \\
& \hspace{22mm} -\alpha\mathbf{W}_{\textrm{RF}(n-1)}\mathbf{\Delta}_{(n-1)}\mathbf{W}_{\textrm{BB}(n-1)}\rVert_F^2, \nonumber\\
%
&\hspace{0pt}(\mathcal{P}_\textrm{3B}): \hspace{0pt} \mathbf{W}_{\textrm{RF}(n)} = \arg \min_{\mathbf{W}_\textrm{RF}}  \mathds{1}_{\mathcal{W}^{N_\textrm{R} \times L_\textrm{R}}}\{\mathbf{W}_\textrm{RF}\} \!+\! \frac{\alpha}{2}\times \nonumber \\  
& \hspace{10mm} \norm{\mathbf{Z}_{(n)}+\mathbf{\Lambda}_{(n-1)}/\alpha- \mathbf{W}_\textrm{RF}\mathbf{\Delta}_{(n-1)}\mathbf{W}_{\textrm{BB}(n-1)}}_F^2, \nonumber\\
%
&\hspace{0pt}(\mathcal{P}_\textrm{3C}): \hspace{0pt} \mathbf{\Delta}_{(n)} = \arg \min_{\mathbf{\Delta}}\lVert\mathbf{y}_\textrm{c}-\mathbf{\Psi}\textrm{vec}(\mathbf{\Delta})\rVert_2^2 +\gamma P(\mathbf{\Delta}), \nonumber \\
& \hspace{23mm} \textrm{ subject to } \mathbf{\Delta} \in \mathcal{D},\nonumber\\
%
&\hspace{0pt}(\mathcal{P}_\textrm{3D}): \hspace{0pt} \mathbf{W}_{\textrm{BB}(n)} = \arg \min_{\mathbf{W}_\textrm{BB}} \frac{\alpha}{2}\lVert\mathbf{Z}_{(n)}+\mathbf{\Lambda}_{(n-1)}/\alpha \nonumber \\ 
& \hspace{27mm} -\mathbf{W}_{\textrm{RF}(n)}\mathbf{\Delta}_{(n)}\mathbf{W}_\textrm{BB}\rVert_F^2, \nonumber\\
%
&\mathbf{\Lambda}_{(n)} = \mathbf{\Lambda}_{(n-1)} + \alpha\left(\mathbf{Z}_{(n)}-{\mathbf{W}_\textrm{RF}}_{(n)}\mathbf{\Delta}_{(n)}{\mathbf{W}_\textrm{BB}}_{(n)}\right),
\end{align}
where $n$ denotes the iteration index, $\mathbf{y}_\textrm{c}\!=\! \textrm{vec}(\textbf{Z}_{(n)}\!+\! \mathbf{\Lambda}_{(n-1)}/\alpha)$ and $\mathbf{\Psi}\!=\! {\mathbf{W}_\textrm{BB}}_{(n-1)}\! \otimes \! {\mathbf{W}_\textrm{RF}}_{(n)}$ ($\otimes$ is the Khatri-Rao product). 

We solve the optimization problems $(\mathcal{P}_\textrm{3A})$-$(\mathcal{P}_\textrm{3D})$ and the solutions are provided in Algorithm 1. The algorithm provides the complete procedure to obtain the optimal analog combiner matrix $\mathbf{W}_\textrm{RF}$, the optimal bit resolution matrix $\mathbf{\Delta}$ and the optimal baseband (or digital) combiner matrix $\mathbf{W}_\textrm{BB}$. It starts by initializing the entries of the matrices $\mathbf{Z}$, ${\mathbf{W}_\textrm{RF}}$, $\mathbf{\Delta}$, ${\mathbf{W}_\textrm{BB}}$ with random values and the entries of the Lagrange multiplier matrix $\mathbf{\Lambda}$ with zeros. For iteration index $n$, $\mathbf{Z}_{(n)}$, ${\mathbf{W}_\textrm{RF}}_{(n)}$, $\mathbf{\Delta}_{(n)}$ and ${\mathbf{W}_\textrm{BB}}_{(n)}$ are updated at each iteration step using the solutions provided in Steps 4, 7, 8, 10 and 11 of Algorithm 1. In Step 7, ${\Pi}_{\mathcal{W}}$ is the operator that projects the solution onto the set $\mathcal{W}$. This is computed by solving the following optimization problem \cite{AGL}:
\begin{align}
\hspace{-25pt}(\mathcal{P}_\textrm{4}): \hspace{25pt} &\min_{\mathbf{A}_{\mathcal{W}}}\|\mathbf{A}_{\mathcal{W}}-\mathbf{A}\|_F^2, \textrm{subject to} \ \mathbf{A}_{\mathcal{W}} \in \mathcal{W}, \nonumber
\end{align}
where $\mathbf{A}$ is an arbitrary matrix and $\mathbf{A}_{\mathcal{W}}$ is its projection onto the set $\mathcal{W}$. The solution to $(\mathcal{P}_\textrm{4})$ is given by the phase of the complex elements of $\mathbf{A}$. Thus, for $\mathbf{A}_\mathcal{W}  = \Pi_\mathcal{W}\{\mathbf{A}\}$ we have
\begin{align}
\mathbf{A}_\mathcal{W}(x,y) = \begin{cases}
0, \ &\mathbf{A}(x,y) = 0 \\
\frac{\mathbf{A}(x,y)}{\left|\mathbf{A}(x,y)\right|}, \ &\mathbf{A}(x,y) \neq 0 
\end{cases},  
\label{EQ:PROJ_F}
\end{align}
where $\mathbf{A}_\mathcal{W}(x,y)$ and $\mathbf{A}(x,y)$ are the elements at the $x$th row-$y$th column of matrices $\mathbf{A}_\mathcal{W}$ and $\mathbf{A}$, respectively. Furthermore, as shown in Step 8, the minimization problem in ($\mathcal{P}_\textrm{3C}$) is solved by implementing CVX \cite{cvx}. A termination criterion related to the maximum permitted number of iterations of the ADMM sequence ($N_\textrm{max}$) is considered. Upon convergence, the number of bits for each ADC is obtained by using \eqref{eq:delta_distorsion_rf} and quantized to the nearest integer value. 

\subsubsection*{Computational complexity analysis of Algorithm 1}
In Algorithm 1, mainly Step 8 involves multiplication by $\mathbf{\Psi}$ whose dimensions are $L_\textrm{R}N_\textrm{R}\times N_\textrm{s} L_\textrm{R}$. In general, the solution of $(\mathcal{P}_\textrm{3C})$ can be upper-bounded by $\mathcal{O}((L_\textrm{R}^2N_\textrm{R}N_\textrm{s})^3)$ which can be improved significantly by exploiting the structure of $\mathbf{\Psi}$.
\begin{algorithm}[t]
    \caption{Proposed ADMM Solution for the A/D Hybrid Combiner Design}
  \begin{algorithmic}[1]
    \STATE \textbf{Initialize:} $\mathbf{Z}$, ${\mathbf{W}_\textrm{RF}}$, $\mathbf{\Delta}$, ${\mathbf{W}_\textrm{BB}}$ with random values, $\mathbf{\Lambda}$ with zeros, $\alpha=1$ and $n=1$
    \WHILE{$n \leq N_\textrm{max}$}
    \STATE $\mathbf{A} = \alpha {\mathbf{W}_\textrm{RF}}_{(n-1)}\mathbf{\Delta}_{(n-1)} {\mathbf{W}_\textrm{BB}}_{(n-1)}$. 
      	\STATE $\mathbf{Z}_{(n)} = \frac{1}{\alpha + 1} \big(\mathbf{W}_\textrm{opt} - \mathbf{\Lambda}_{(n-1)} +  \mathbf{A}\big)$.
      	\STATE $\mathbf{B} = \mathbf{\Lambda}_{(n-1)}+\alpha\mathbf{Z}_{(n)}$.
      	\STATE $\mathbf{C} = \alpha\mathbf{\Delta}_{(n-1)}{\mathbf{W}_\textrm{BB}}_{(n-1)}{\mathbf{W}_\textrm{BB}}_{(n-1)}^H\mathbf{\Delta}_{(n-1)}^H$.
\STATE ${\mathbf{W}_\textrm{RF}}_{(n)} =  \Pi_\mathcal{W}\{\mathbf{B}{\mathbf{W}_\textrm{BB}}_{(n-1)}^H\mathbf{\Delta}_{(n-1)}^H\mathbf{C}^{-1}\}$. 
\STATE Update $\mathbf{\Delta}_{(n)}$ by solving ($\mathcal{P}_\textrm{3C}$) using CVX \cite{cvx}.
\STATE $\mathbf{D} = \alpha\mathbf{\Delta}_{(n)}^H{\mathbf{W}_\textrm{RF}}_{(n)}^H{\mathbf{W}_\textrm{RF}}_{(n)}\mathbf{\Delta}_{(n)}$.
\STATE ${\mathbf{W}_\textrm{BB}}_{(n)} = \mathbf{D}^{-1}\mathbf{\Delta}_{(n)}^H {\mathbf{W}_\textrm{RF}}_{(n)}^H\mathbf{B}$.
\STATE $\mathbf{\Lambda}_{(n)} = \mathbf{\Lambda}_{(n-1)} + \alpha\left(\mathbf{Z}_{(n)}-{\mathbf{W}_\textrm{RF}}_{(n)}\mathbf{\Delta}_{(n)}{\mathbf{W}_\textrm{BB}}_{(n)}\right)$.
 \STATE $n \gets n+1$
      \ENDWHILE
    \RETURN ${\mathbf{W}_\textrm{RF}}_{(N_\textrm{max})}$, $\mathbf{\Delta}_{(N_\textrm{max})}$, ${\mathbf{W}_\textrm{BB}}_{(N_\textrm{max})}$
  \end{algorithmic}
\end{algorithm}

\section{Simulation Results}
In this section, we evaluate the performance of the proposed ADMM technique using computer simulation results. 
The results have been averaged over 1,000 Monte-Carlo realizations.
\subsubsection*{System setup}
We set the following parameters, unless specified otherwise, to obtain the desired results: $N_\textrm{T}$ = $32$, $N_\textrm{R}$ = $16$, $L_\textrm{R} = 4$, $N_\textrm{s} = 4$, $N_\textrm{cl}=2$, $N_\textrm{ray}=4$, $N_\textrm{max}=40$, $m=1$, $M=8$, $\alpha=1$ and $\sigma^2_{\alpha, i}$ = $1$. The azimuth angles of departure and arrival are computed with uniformly distributed mean angles; each cluster follows a Laplacian distribution about the mean angle. The antenna elements in the ULA are spaced by distance $d$ = $\lambda/2$. The signal-to-noise ratio (SNR) is given by the inverse of the noise variance, i.e., $1/\sigma_\textrm{n}^2$. The transmit vector $\mathbf{x}$ is composed of the normalized i.i.d. Gaussian symbols. The values used for the terms in the power model in \eqref{eq:power_total} of Section III are $P_\textrm{ADC} = 100$ mW, $P_\textrm{CP} = 10$ W, $P_\textrm{R} = 100$ mW and $P_\textrm{PS} = 10$ mW. Note that to measure the spectral efficiency (SE) performance, we compute the ratio $R/B$ bits/s/Hz  where $B$ represents the bandwidth, and for the simulations we set $B=1$ Hz. For simulations, the precoder matrix $\mathbf{F}$ is considered equal to the optimal fully digital precoder matrix \cite{ayachTWC2014, aryanIET2016}, i.e., the product of  $1/\sqrt{N_\textrm{s}}$ and first $N_\textrm{s}$ columns of the right singular matrix $\mathbf{V}_\textrm{H}$. 
 
\subsubsection*{Convergence of the proposed ADMM solution}
Fig. 2 shows the convergence of the ADMM solution as proposed in Algorithm 1 to obtain the optimal bit resolution at each ADC and corresponding optimal combiner matrices. The proposed solution converges rapidly at around 20 iterations and mean square error (MSE), $\norm{\mathbf{W}_\textrm{opt} - \mathbf{W}_{\textrm{RF}(N_\textrm{max})}\mathbf{\Delta}_{(N_\textrm{max})}\mathbf{W}_{\textrm{BB}(N_\textrm{max})}}_F^2$, goes as low as -20 dB. A lower number of RX antennas shows lower MSE as expected, since fewer parameters are required to be estimated.

\begin{figure}[t]
	\begin{center}
		\includegraphics[width=0.51\textwidth]{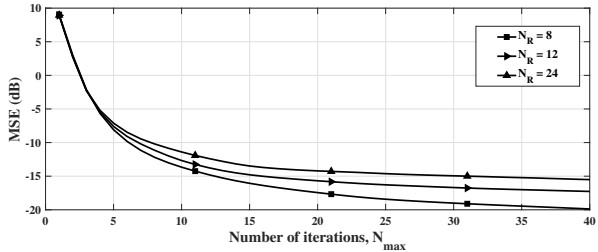}
	\caption{Convergence of the ADMM solution for different $N_\textrm{R}$ at $\gamma=0.01$.}
        \label{fig:conv}
	\end{center}
	\vspace{-2mm}
\end{figure}

\begin{figure}[t]
 \centering \includegraphics[width=0.515\textwidth]{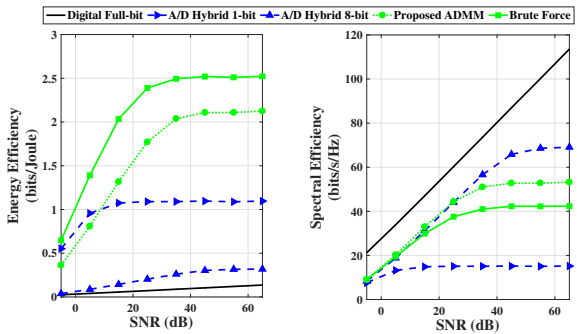}
 		\caption{EE and SE performance w.r.t. SNR at  $N_\textrm{R}=16$ and $\gamma=0.01$.}
 		\vspace{-5mm}
 \end{figure}
 
 \begin{figure}[t]  
 \centering \includegraphics[width=0.51\textwidth]{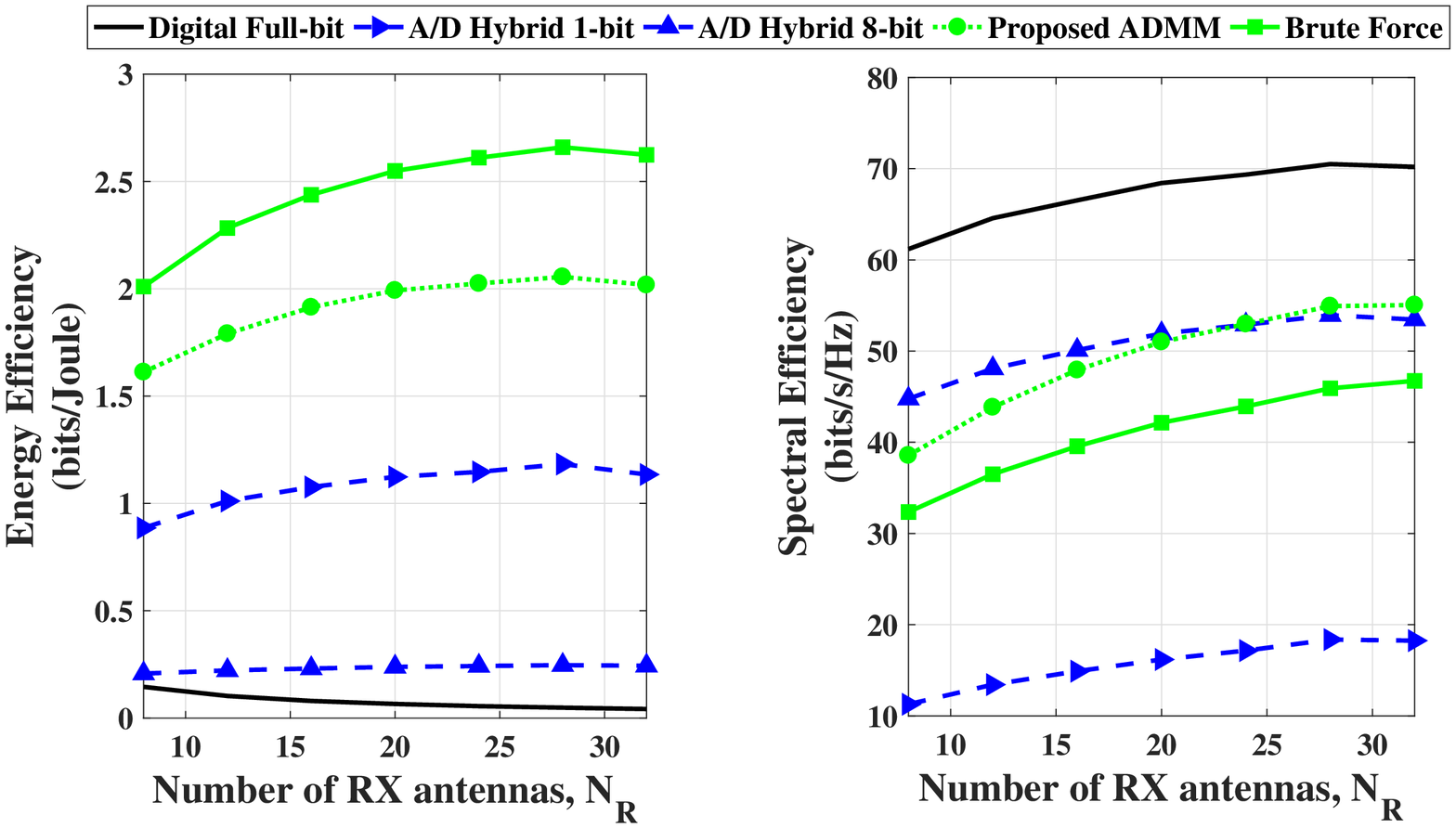}
 		\caption{EE and SE performance w.r.t. $N_\textrm{R}$ at SNR = $30$ dB and $\gamma=0.01$.}
 		\vspace{-2mm}
 \end{figure}
 
\begin{figure}[t] 
\centering \includegraphics[width=0.515\textwidth]{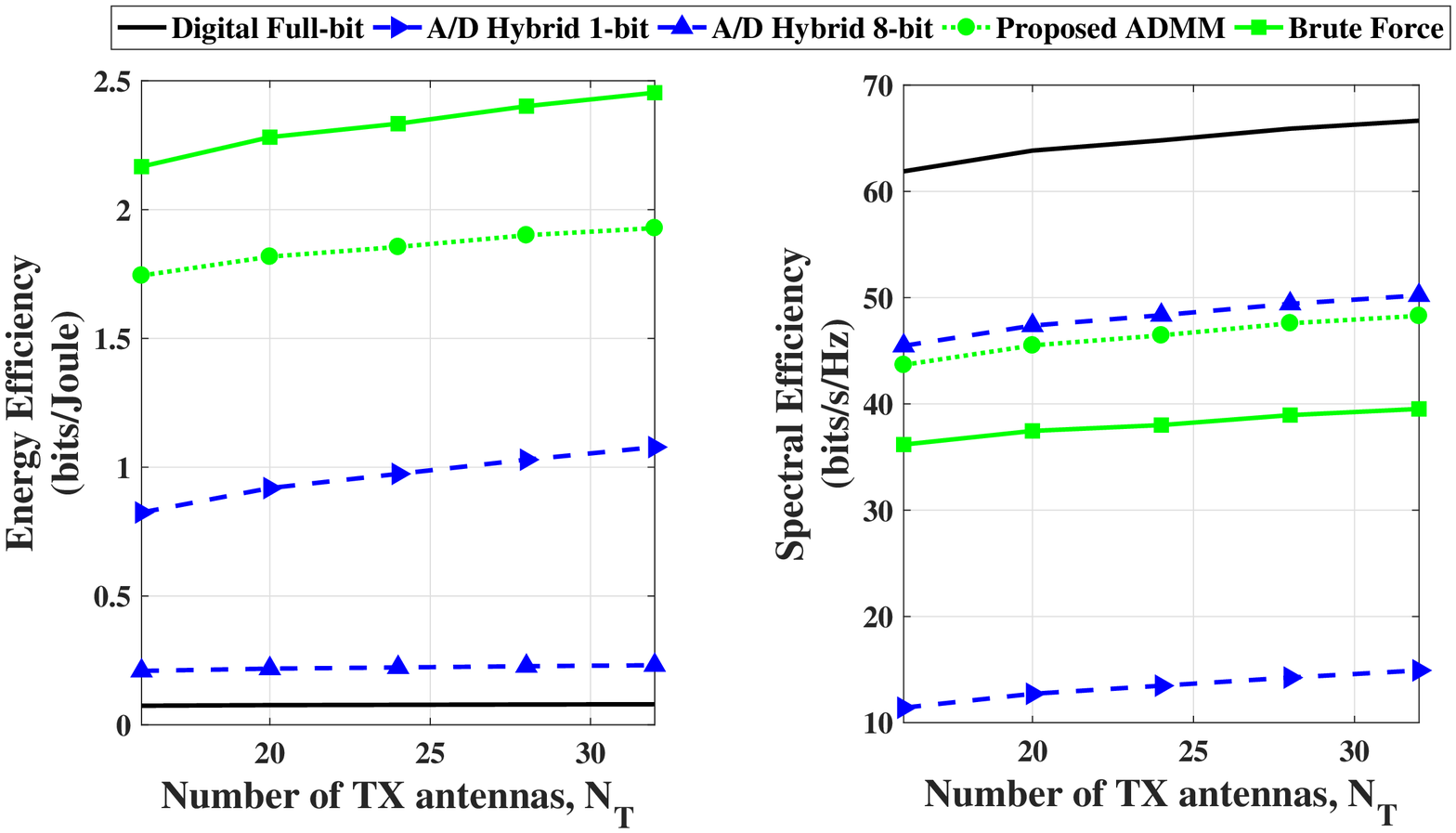}
 		\caption{EE and SE performance w.r.t. $N_\textrm{T}$ at SNR = $30$ dB and $\gamma=0.01$.}
 		\vspace{-5mm}
 \end{figure}
 
\begin{figure}[t] 
\centering \includegraphics[width=0.495\textwidth]{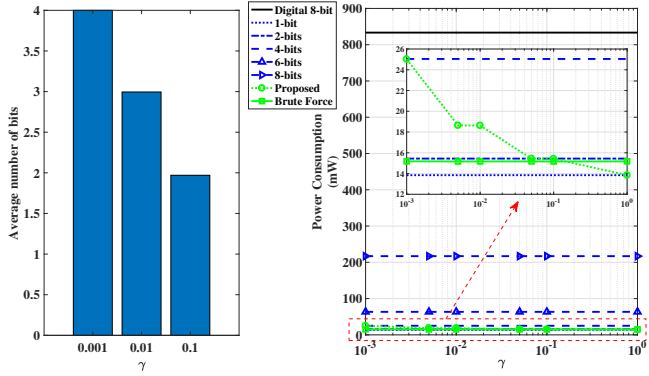}
 		\caption{Average number of bits for proposed ADMM and power consumption w.r.t. $\gamma$ at SNR = $30$ dB.}
 			\vspace{-2mm}
 \end{figure} 
 
 \begin{figure}[t]  
\centering \includegraphics[width=0.515\textwidth]{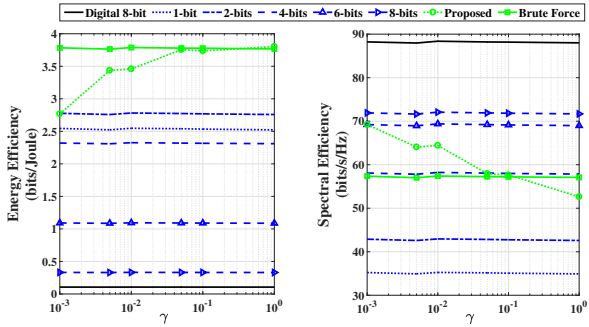}
 		\caption{EE and SE performance w.r.t. $\gamma$ at SNR = $30$ dB.}
 		\vspace{-5mm}
 \end{figure}  

\subsubsection*{Benchmark techniques}
\subsubsection{Digital combining with full-bit resolution}
We consider the conventional fully digital beamforming architecture, where the number of RF chains at the RX is equal to the number of RX antennas, i.e., $L_\textrm{R} = N_\textrm{R}$. The fully digital combining solution may be provided by SVD and waterfilling \cite{rappa2}. In terms of the resolution sampling, we consider full-bit resolution, i.e., $M = 8$-bit, which represents the optimum from the achievable SE perspective.

\subsubsection{A/D Hybrid combining with 1-bit and 8-bit resolutions}
We also consider a A/D hybrid combining architecture with $L_\textrm{R} < N_\textrm{R}$, for two cases of bit resolution: a) 1-bit resolution which usually shows reasonable EE performance, and b) 8-bit resolution which usually shows high SE results. 

\subsubsection{Brute force with A/D hybrid combining}
We also implement an exhaustive search approach as an upper bound for EE maximization called brute force (BF), based on \cite{ranziSAC2016}, which clearly shows the energy-rate performance trade-offs in the simulations. It makes a search over the number of RF chains $L_\textrm{R}$ and all the available bit resolutions, i.e., $b = 1,..., M$. It then finds the best EE out of all the possible cases and chooses the corresponding optimal resolution for each ADC. This method provides the best possible EE performance, but it is computationally intractable for $L_\textrm{R}>4$.

Fig. 3 shows the performance of the proposed ADMM solution compared with existing benchmark techniques with respect to (w.r.t.) SNR at $N_\textrm{R}=16$. The proposed ADMM solution achieves high EE which has performance close to the BF approach and better than the 8-bit hybrid, 1-bit hybrid and full-bit digital baselines. For example, at SNR = $20$ dB, the proposed ADMM solution outperforms 1-bit hybrid, 8-bit hybrid and full-bit digital baselines by about 0.45 bits/Joule, 1.375 bits/Joule and 1.44 bits/Joule, respectively. It also exhibits better SE than 1-bit hybrid and has similar performance to the 8-bit hybrid baseline. 

There is an energy-rate trade-off between the proposed solution and the BF approach as we can achieve better rate with lower EE and vice-versa. Moreover, the proposed solution has lower complexity than the BF approach because the BF involves a search over all the possible bit resolutions while the proposed solution directly optimizes the number of bits to obtain an optimal number of bits at each ADC. We constrain the number of RF chains $L_\textrm{R}=4$ for the BF approach due to the high complexity order which is $\mathcal{O}(M^{L_\textrm{R}})$. Also note that the proposed approach enables the selection of different resolutions for different ADCs and thus, it offers a better trade-off for EE versus SE than existing approaches which are based on a fixed ADC resolution. 
 
Figs. 4 and 5 show the performance results w.r.t. the number of RX and TX antennas at 30 dB SNR. The proposed ADMM solution again achieves high EE and performs close to the BF approach and better than the 8-bit hybrid, 1-bit hybrid and full-bit digital baselines. For example, at $N_\textrm{R} = 20$, the proposed ADMM solution outperforms 1-bit hybrid, 8-bit hybrid and full-bit digital baselines by about 0.85 bits/Joule, 1.75 bits/Joule and 1.875 bits/Joule, respectively. Also, for $N_\textrm{T} = 20$, the proposed solution outperforms 1-bit hybrid, 8-bit hybrid and full-bit digital baselines by about 1.0 bits/Joule, 1.5 bits/Joule and 1.625 bits/Joule, respectively. The proposed solution also exhibits better SE than 1-bit hybrid and has similar performance to the 8-bit hybrid baseline. Both the figures follow the energy-rate trade-off with the BF approach.

Furthermore, we investigate the performance over the trade-off parameter $\gamma$ introduced in ($\mathcal{P}_\textrm{2}$). Fig. 6 shows the bar plot of average of the optimal number of bits selected by the proposed solution for each ADC versus $\gamma$. The average optimal number decreases with the increase in $\gamma$, for example, it is 4 for $\gamma = 0.001$, 3 for $\gamma = 0.01$ and 2 for $\gamma = 0.1$. Fig. 6 also shows that the power consumption in the proposed case is considerably low and decreases with the increase in the trade-off parameter $\gamma$ unlike digital 8-bit, several fixed bit hybrid baselines and the BF approach. Fig. 7 shows the EE and SE plots for several solutions w.r.t. $\gamma$. It can be observed that the proposed solution achieves higher EE than the fixed bit allocation solutions and achieves comparable EE and SE results to the BF approach. These curves also show that adjusting $\gamma$ allows the system to vary the energy-rate trade-off.

\section{Conclusion}
This paper proposes an energy efficient mmWave A/D hybrid MIMO system which can vary the ADC bit resolution at the RX. This method uses the decomposition of the A/D hybrid combiner matrix into three parts representing the analog combiner matrix, the bit resolution matrix and the digital combiner matrix. These three matrices are optimized by the novel ADMM solution which outperforms the EE of the full-bit digital, 1-bit hybrid combining and 8-bit hybrid combining baselines. There is an energy-rate trade-off with the BF approach which yields the upper bound for EE maximization. The proposed approach enables the selection of the optimal resolution for each ADC and thus, it offers better trade-off for data rate versus EE than existing approaches based on fixed ADC resolution. In future work, we will jointly optimize the DAC and ADC bit resolution and hybrid precoder and combiner matrices at the TX and the RX.

\bibliographystyle{IEEEtran}

\end{document}